\newtheorem{lemma}{Lemma}
\newtheorem{proposition}{Proposition}
\newtheorem*{thm*}{Theorem}
\newtheorem{remark}{Remark}
\newtheorem*{remark*}{Remark}
\crefname{thm}{Theorem}{Theorems}
\newcounter{opteq}
\newcommand{\txt}[1]{\text{\normalfont #1}}
\DeclareMathOperator*{\argmax}{arg\,max}
\DeclareMathOperator{\T}{\top}
\DeclareMathOperator{\rank}{rank}
\DeclareMathOperator{\diag}{diag}
\DeclareMathOperator{\tx}{t}
\DeclareMathOperator{\rx}{r}
\newcommand{\Nt}{20}
\newcommand{\Nr}{20}
\newcommand{\Q}{12}
\newcommand{\NSigma}{\fpeval{\Nt * \Nr/2}}
\newcommand{\figsize}{4.5}
\newcounter{savesection}
\newcounter{apdxsection}
\renewcommand\appendix{\par
	\setcounter{savesection}{\value{section}}%
	\setcounter{section}{\value{apdxsection}}%
	\setcounter{subsection}{0}%
	\gdef\thesection{\@Alph\c@section}}
\newcommand\unappendix{\par
	\setcounter{apdxsection}{\value{section}}%
	\setcounter{section}{\value{savesection}}%
	\setcounter{subsection}{0}%
	\gdef\thesection{\@arabic\c@section}}
\pgfplotsset{compat=1.16}
\pgfplotsset{table/search path={Data},
        colormap={parula}{
            rgb255=(53,42,135)
            rgb255=(15,92,221)
            rgb255=(18,125,216)
            rgb255=(7,156,207)
            rgb255=(21,177,180)
            rgb255=(89,189,140)
            rgb255=(165,190,107)
            rgb255=(225,185,82)
            rgb255=(252,206,46)
            rgb255=(249,251,14)
        },
        every axis/.append style={
                    label style={font=\scriptsize},
                    tick label style={font=\scriptsize},  
                    title style={font=\scriptsize},
                    legend style = {font = \scriptsize},
                    },
    contour/label node code/.code={%
        \node{$m=\pgfmathprintnumber{#1}$};}
    }
\newcommand\blfootnote[1]{%
  \begingroup
  \renewcommand\thefootnote{}\footnote{#1}%
  \addtocounter{footnote}{-1}%
  \endgroup
}
\def\tikz@auto@anchor{%
    \pgfmathtruncatemacro\angle{atan2(\pgf@x,\pgf@y)-90}
    \edef\tikz@anchor{\angle}%
}
\def\expandafter\normalsize\expandafter{%
    \normalsize%
    \setlength\abovedisplayskip{4pt}%
    \setlength\belowdisplayskip{4pt}%
}
\title{Sparse array sensor selection in ISAC with identifiability guarantees}
\name{\begin{tabular}{c} Robin Rajam\"{a}ki$^\ddagger$,
		~Piya Pal$^\ast$
		\end{tabular}}
\address{$^\ddagger$Department of Information and Communications Engineering, Aalto University, Finland\\ $^\ast$Department of Electrical and Computer Engineering, University of California San Diego, USA}
\begin{document}
	\setlength\belowcaptionskip{-15ex}
	%
	\maketitle
	\begin{abstract}	
     This paper investigates array geometry and waveform design for integrated sensing and communications (ISAC) employing sensor selection. We consider ISAC via index modulation, where various subsets of transmit (Tx) sensors are used for both communications and monostatic active sensing. The set of Tx subarrays make up a codebook, whose cardinality we maximize (for communications) subject to guaranteeing a desired target identifiability (for sensing). To characterize the size of this novel optimal codebook, we derive first upper and lower bounds, which are tight in case of the canonical uniform linear array (ULA) and any nonredundant array. We show that the ULA achieves a large codebook---comparable to the size of the conventional unconstrained case---as satisfying the identifiability constraint only requires including two specific sensors in each Tx subarray (codeword). In contrast, nonredundant arrays, which have the largest identifiability for a given number of physical sensors, only have a single admissible codeword, rendering them ineffectual for communications via sensor selection alone. The results serve as a step towards an analytical understanding of the limits of sensor selection in ISAC and the fundamental trade-offs therein.
    \blfootnote{This work was supported in part by projects Business Finland 6G-ISAC, Research Council of Finland FUN-ISAC (359094), EU Horizon INSTINCT (101139161), and grant ONR N000141912227.}
	\end{abstract}
		\begin{keywords}
			Index modulation, ISAC, sensor selection, sparse arrays, redundancy, waveform design, MIMO
		\end{keywords}
	\setlength{\textfloatsep}{7pt}

	\section{Introduction}

ISAC is envisioned to be a core technology of 6G and beyond wireless systems \cite{liu2023integrated}. In multiple-input multiple output (MIMO) ISAC, two key resources shared by the sensing and communications functionalities are the \emph{transmit} waveform and sensor array geometry, which 
impact, e.g., link reliability, throughput, 
and 
spatial resolution. 
Jointly harnessing the waveform and array geometry in an optimal manner becomes key to reap these advantages, which typically improve with an increasing 
number of sensors. However, the number of 
digital spatial channels is often limited by 
cost and power consumption constraints, as in, for instance, autonomous sensing \cite{ma2020joint,tabrikian2021cognitive} or 
millimeter wave communications \cite{mendezrial2016hybrid}. 
A common solution is to employ fewer digital channels than sensors, possibly combined with \emph{sensor selection} at the transmitter (Tx) or receiver (Rx) \cite{mendezrial2016hybrid,ahmed2020antenna,tabrikian2021cognitive,wang2023joint,xu2023abandwidth,sankar2024sparse,vanderwerf2024receiver}. 
An application of Tx sensor selection that has recently experienced a surge of renewed research interest 
stems from the family of so-called \emph{index modulation} techniques \cite{hassanien2016signaling,wang2019dual,ma2021spatial,xu2023hybrid,elbir2024index}, 
and is known as \emph{spatial modulation} in classical MIMO communications \cite{direnzo2014spatial}. Here, Tx sensor subsets (subarrays) constitute information bearing codewords, which in the context of ISAC 
are \emph{simultaneously} used for sensing by launching independent (typically orthogonal) waveforms from the selected transmit elements. This 
intimate coupling between communications and sensing 
manifests as a \emph{dual-function codebook} design problem, which is the focus of this paper. 
To tackle this challenging problem, we leverage 
recent 
advances in 
\emph{sparse sensor arrays}, which enjoy provable advantages over conventional uniform array geometries, including enhanced 
identifiabilty and resolution 
\cite{sarangi2023superresolution,amin2024sparsearrays}.

Despite active research into ISAC and sensor selection, important fundamental questions still lack sharp answers, such as 
\emph{which} (sparse) array geometries are suitable for ISAC, and \emph{how many} such configurations exist? 
This paper provides partial answers to these questions by focusing on Tx sensor selection in monostatic MIMO ISAC systems employing index modulation. 
We present first analytical results on 
ISAC codebook design with \emph{identifiability guarantees}. Specifically, we formulate a novel optimization problem for maximizing the size of the ISAC 
codebook, subject to guaranteeing a desired target identifiablity for a given number of Tx and Rx sensors. The Rx array is fixed for all codewords, which correspond to different Tx sensor subsets of equal cardinality. We fully characterize the solution in case of the 
ULA and nonredundant array, showing that the optimal codebook contains many admissible codewords (mappings) in the former case 
and only one in the latter. We also derive 
generally applicable upper and lower bounds on the optimal codebook size. These preliminary results shed light on fundamental trade-offs between 
communications and sensing in ISAC, 
paving the way towards a more complete analytical understanding thereof. 

	
    \section{Signal model and background}
    Consider a narrowband MIMO 
    ISAC system, where a base station (BS) with collocated 
    Tx and Rx 
    arrays simultaneously senses the environment and communicates with a single $M$-antenna user equipment (UE). Crucially, the BS uses the \emph{same} Tx array geometry $\mathbb{D}_{\tx}$ and spatio-temporal waveform matrix $\bm{S}\in\mathbb{C}^{T\times N_{\tx}}$ for \emph{both} (active) sensing and (downlink) communications. Here, $T$ denotes the length of the waveform
    and $N_{\tx}=|\mathbb{D}_{\tx}|$ the number of Tx sensors. 
    Similarly, the Rx array geometry at the BS, $\mathbb{D}_{\rx}\subset \mathbb{N}$ with $|\mathbb{D}_{\rx}|=N_{\rx}$, is assumed one-dimensional, collinear and collocated with the Tx array.
    
    The \emph{downlink} received signal at the UE is represented by
    \begin{align*}
    	\bm{Z}=\bm{H}\bm{S}^{\T}+\bm{W},
    \end{align*}
    where $\bm{H}\in\mathbb{C}^{M\times N_{\tx}}$ is the channel matrix between the BS and UE, and $\bm{W}\in\mathbb{C}^{M\times T}$ is a noise matrix. We assume that $\bm{H}$ is \emph{known} to the UE. 
    The \emph{communications task} of the BS is to transmit information to the UE 
    by designing the set from which $\bm{S}$ is drawn; the so-called ``codebook'' (defined formally in \cref{sec:im-isac_codebook}). The task of the UE is to decode $\bm{S}$ given measurement $\bm{Z}$ and channel $\bm{H}$, or an estimate thereof.
    
    \emph{Dual-function} waveform $\bm{S}$ is also used for sensing by the BS. Assuming $K$ far field scatterers with angular directions $\bm{\theta}\in[-\tfrac{\pi}{2},\tfrac{\pi}{2})^K$ and scattering coefficients $\bm{\gamma}\in\mathbb{C}^K$, the $N_{\rx}\!\times\!T$ backscattered signal observed at the BS assumes the form \cite{li2009mimo} 
    \begin{align}
    	\bm{Y}=\bm{A}_{\rx}(\bm{\theta})\diag(\bm{\gamma})\bm{A}_{\tx}^{\T}(\bm{\theta})\bm{S}^{\T}+\bm{N}, \label{eq:signal_sensing}
    \end{align}
    where $\bm{A}_{\rx}(\bm{\theta})\in\mathbb{C}^{N_{\rx}\times K}$ and $\bm{A}_{\tx}(\bm{\theta})\in\mathbb{C}^{N_{\tx}\times K}$ denote the manifold matrices of the Rx and Tx arrays, respectively, and $\bm{N}\in\mathbb{C}^{N_{\rx}\times T}$ is a matrix of additive noise. 
    The \emph{sensing task} of the BS is to estimate $\bm{\theta}$ given measurement $\bm{Y}$, waveform matrix $\bm{S}$, and knowledge of the array manifolds $\bm{A}_{\rx}(\cdot)$, $\bm{A}_{\tx}(\cdot)$.\vspace{-.2cm}
    
    \subsection{Communications via Tx sensor selection in ISAC}\label{sec:im_background}
    
    The BS communicates with the UE via sensor index modulation (generalized spatial modulation \cite{direnzo2014spatial}), where different codewords correspond to different Tx subarrays. Such communications schemes find applications in both conventional MIMO communications \cite{direnzo2014spatial} and modern ISAC systems \cite{wang2019dual,ma2021spatial,xu2023hybrid} where constraints on power consumption, cost, and decoding complexity can be alleviated by employing fewer RF chains than antenna elements at the transmitter. In effect, $\bm{S}$ becomes a ``sensor selection'' waveform matrix of the form
    \begin{align}
    	\bm{S}=\bm{U}\bm{J}_{\mathbb{S}}, \label{eq:sens_sel_wf}
    \end{align} 
   where $\mathbb{S}\!=\!\{g[i]\}_{i}\!\subseteq\!\mathbb{D}_{\tx}\!=\!\{d_{\tx}[n]\}_n$ is a Tx subarray and $\bm{J}_{\mathbb{S}}\!\in\!\{0,1\}^{|\mathbb{S}|\times N_{\tx}}$ a row selection matrix with $(i,n)$th entry
    \begin{align}
    	[\bm{J}_{\mathbb{S}}]_{i,n} \triangleq \begin{cases}
    		1,&\txt{if } g[i]=d_{\tx}[n]\\
    		0,&\txt{otherwise}.
    	\end{cases}\label{eq:selection_mat}
    \end{align}
    We assume that $\bm{U}\in\mathbb{C}^{T\times |\mathbb{S}|}$ is a \emph{fixed} full column rank matrix, \emph{known} to the UE.\footnote{In general, $\bm{U}$ may be unknown or only partially known to the UE, depending on whether any encoding is employed beyond sensor selection \cite{hassanien2016signaling}.} Typically, the columns of $\bm{U}$ are orthogonal, representing independent waveforms launched from the selected Tx sensors \cite{wang2019dual}. 
    Hence, 
    $\mathbb{S}$ is the information-bearing quantity 
    \emph{unknown} to the UE, 
    whose task is to detect 
    $\mathbb{S}$ belonging to some finite \emph{codebook} $\mathcal{C}$ of Tx subarrays. 
    
    The choice of $\mathcal{C}$ crucially impacts 
    communications performance, including 
    achievable rate and decoding complexity. 
    In particular, 
    let $\mathcal{C}=\mathcal{C}(Q,\mathbb{D}_{\tx})$ denote an arbitrary codebook based on selecting $Q$ out of $N_{\tx}$ sensors from Tx array $\mathbb{D}_{\tx}$. 
    Denoting the set of \emph{all} $Q$-sensor subsets of $\mathbb{D}_{\tx}$ by
    \begin{align}
    	\mathcal{C}^\txt{u}(Q,\mathbb{D}_{\tx})
    	\triangleq 
    	\{\mathbb{S}\subseteq\mathbb{D}_{\tx}: |\mathbb{S}|=Q\},\label{eq:codebook_unconstrained}
    \end{align}
    it is clear that $\mathcal{C}(Q,\mathbb{D}_{\tx})\!\subseteq\!\mathcal{C}^\txt{u}(Q,\mathbb{D}_{\tx})$ and 
    thus $|\mathcal{C}(Q,\mathbb{D}_{\tx})|\!\leq\!|\mathcal{C}^\txt{u}(Q,\mathbb{D}_{\tx})|\!=\!\binom{N_{\tx}}{Q}$. 
    From a communications perspective, directly using codebook $\mathcal{C}^\txt{u}(Q,\mathbb{D}_{\tx})$ would maximize the constellation size and thereby achievable rate at high signal-to-noise-ratios (SNRs), or simply yield maximally many codewords to choose from. 
    From a sensing perspective, however, all codewords in $\mathcal{C}^\txt{u}(Q,\mathbb{D}_{\tx})$ may \emph{not} be equally desirable. For example, some may yield unsatisfactory 
    Tx or joint Tx-Rx beampatterns \cite{wang2019dual,ma2021spatial}. The question arises how large the 
    codebook can be to meet a desired level of sensing performance? 
    We explore this question by focusing on \emph{identifiability} of $\bm{\theta}$ in \eqref{eq:signal_sensing} as a key performance indicator (KPI) for sensing. This \emph{fundamental} KPI has not been considered before in ISAC 
    codebook design. 
    As we will see, the maximal codebook size can be analytically characterized as a function of the number of identifiable targets, providing insight into trade-offs and optimal array/waveform designs in ISAC.\vspace{-.2cm}
    
    \subsection{Sum co-array and identifiability in active sensing}
    The number of targets that can be 
    identified by a 
    monostatic 
    active sensing system 
    depends on the so-called sum co-array:
    \begin{align*}
        \mathbb{D}_{\tx}+\mathbb{D}_{\rx}=\{d_{\tx}+d_{\rx}: d_{\tx}\in\mathbb{D}_{\tx}, d_{\rx}\in\mathbb{D}_{\rx}\}.
    \end{align*}
    It is well known that up to $K\leq |\mathbb{D}_{\tx}+\mathbb{D}_{\rx}|/2$ targets can be uniquely identified from \eqref{eq:signal_sensing}, in absence of noise, by appropriate waveform and array design \cite[Ch.~1]{li2009mimo}. 
    A sufficient condition for achieving this upper bound is that the sum co-array is \emph{contiguous} and waveform matrix $\bm{S}$ has full column rank ($\rank(\bm{S})=N_{\tx}$). A contiguous sum co-array satisfies $\mathbb{D}_{\tx}+\mathbb{D}_{\rx}=\mathbb{U}_{N_\Sigma}$, where $N_\Sigma\triangleq |\mathbb{D}_{\tx}+\mathbb{D}_{\rx}|$ denotes the number of sum co-array elements, and $\mathbb{U}_{N}\triangleq \{0,1,\ldots,N-1\}$ the set of nonnegative integers smaller than $N$, i.e., the normalized sensor positions of a ULA in units of \emph{half carrier wavelengths}. However, when $\bm{S}$ is column-rank-deficient ($\rank(\bm{S})<N_{\tx}$), a contiguous sum co-array is no longer sufficient, but judicious waveform design becomes necessary for achieving the maximum identifiability $N_\Sigma/2$ facilitated by the array geometry \cite{rajamaki2023importance}. This is also the case when employing Tx sensor selection, since any $\bm{S}$ in \eqref{eq:sens_sel_wf} satisfies $\rank(\bm{S})\leq |\mathbb{S}|$, where $|\mathbb{S}|<N_{\tx}$ holds when $\mathbb{S}$ is a strict subset of $\mathbb{D}_{\tx}$. 
    Indeed, when the sum co-array is contiguous, 
    it can be shown that 
    a sensor selection waveform (of rank $|\mathbb{S}|$) corresponding to Tx subset $\mathbb{S}\subseteq \mathbb{D}_{\tx}$ achieves maximal identifiability if and only if the sum set of $\mathbb{S}$ with the Rx array $\mathbb{D}_{\rx}$ is also contiguous \cite{rajamaki2023arrayinformed}, that is,
    \begin{align}
    \mathbb{S}+\mathbb{D}_{\rx}=\mathbb{D}_{\tx}+\mathbb{D}_{\rx}=\mathbb{U}_{N_\Sigma}.\label{eq:sum_set}
    \end{align}
    Next, we 
    investigate how many such 
    $\mathbb{S}$ exist at most.
    
    \section{Identifiability-maximizing codebook: Optimal Tx sensor selection for ISAC}\label{sec:im-isac_codebook}
    We propose constraining the sum set of all codewords in the ISAC codebook to guarantee maximum identifiability. 
    First, we define the set of $Q$-sensor Tx subarrays whose sum sets equal that of a given physical Tx-Rx array pair $(\mathbb{D}_{\tx},\mathbb{D}_{\rx})$: 
    \begin{align*}
    	\mathcal{C}^\txt{c}(Q,\mathbb{D}_{\tx},\mathbb{D}_{\rx})
    	\triangleq \{\mathbb{S}\subseteq\mathbb{D}_{\tx}: |\mathbb{S}|=Q,\  \mathbb{S}+\mathbb{D}_{\rx}=
    	\mathbb{D}_{\tx}+\mathbb{D}_{\rx}
    	\}.
    \end{align*}
  	Observe that the Rx array geometry $\mathbb{D}_{\rx}$ is \emph{fixed} for all codewords (Tx subarrays $\mathbb{S}$). 
  	We now wish to optimize codebook $\mathcal{C}^\txt{c}(Q,\mathbb{D}_{\tx},\mathbb{D}_{\rx})$ with respect to physical array geometries $\mathbb{D}_{\tx},\mathbb{D}_{\rx}$, given parameter tuple $(Q,N_{\tx},N_{\rx},N_\Sigma)$. Herein, we consider maximizing the size of $\mathcal{C}^\txt{c}$ subject to each codeword $\mathbb{S}\in\mathcal{C}^\txt{c}$ 
    achieving a contiguous 
    sum set $\mathbb{S}+\mathbb{D}_{\rx}=\mathbb{U}_{N_\Sigma}$:
    \begin{align*}
    	(\mathcal{C}^\star, \mathbb{D}_{\tx}^\star,\mathbb{D}_{\rx}^\star)\!\triangleq\!\argmax_{\mathcal{C},\mathbb{D}_{\tx},\mathbb{D}_{\rx}}\{|\mathcal{C}|:\ &\mathcal{C}\!=\!\mathcal{C}^\txt{c}(Q,\mathbb{D}_{\tx},\mathbb{D}_{\rx}), |\mathbb{D}_{\tx}|\!=\!N_{\tx},\\[-2ex] &|\mathbb{D}_{\rx}|=N_{\rx}, \mathbb{D}_{\tx}+\mathbb{D}_{\rx}=\mathbb{U}_{N_\Sigma}\}.
    \end{align*}
    We refer to $\mathcal{C}^\star\subseteq\mathcal{C}^\txt{u}(Q,\mathbb{D}_{\tx}^\star)$ as the \emph{identifiability-maximizing} ISAC (IM-ISAC) codebook, or \emph{optimal} codebook for short. Specifically, $\mathcal{C}^\star$ represents a set of $Q$-sensor Tx subarrays yielding a contiguous sum set of size $N_\Sigma$. Importantly, among all codebooks with this property, $\mathcal{C}^\star$ has the \emph{largest} cardinality. Note that each codeword in $\mathcal{C}^\star$ encodes $\lfloor\log_2|\mathcal{C}^\star| \rfloor$ bits and guarantees the identifiability of $N_\Sigma/2$ targets. Optimal codebook $\mathcal{C}^\star$, and the associated Tx array $\mathbb{D}_{\tx}^\star$ and Rx array $\mathbb{D}_{\rx}^\star$, need not be unique. 
    Finding (any) $\mathcal{C}^\star$ is nevertheless a challenging combinatorial problem. We will therefore instead seek bounds on $|\mathcal{C}^\star|$ for admissible tuples $(Q,N_{\tx},N_{\rx},N_\Sigma)$.\vspace{-.2cm}
    
    \section{Bounds on size of optimal codebook:\\ Preliminary insights  into ISAC trade-off}
    
    
    We start by specifying for which tuples $(Q,N_{\tx},N_{\rx},N_\Sigma)$ set $\mathcal{C}^\star$ is \emph{nonempty}. Note that the range of values that such $Q$ and $N_\Sigma$ can take depends on (free) parameters $N_{\tx}, N_{\rx}\in\mathbb{N}_+$. 
    
    Firstly, the number of sum co-array elements can be shown to satisfy $N_\Sigma\in[N_{\tx}+N_{\rx}-1,N_{\tx}N_{\rx}]$, 
	where the lower and upper bounds correspond to maximally and minimally redundant array configurations, respectively. However, 
    \begin{align}
    N_\Sigma \in [N_{\tx}+N_{\rx}-1, Q N_{\rx}] \label{eq:NSigma_range}
    \end{align}
    must actually hold for any $Q\in[1,N_{\tx}]$ such that Tx subarray $\mathbb{S}\subseteq\mathbb{D}_{\rx}$ has a contiguous sum set $\mathbb{S}+\mathbb{D}_{\rx}$ equal to the sum co-array $\mathbb{D}_{\tx}+\mathbb{D}_{\rx}$, since $|\mathbb{S}+\mathbb{D}_{\rx}|\leq|\mathbb{S}| |\mathbb{D}_{\rx}|= QN_{\rx}\leq N_{\tx}N_{\rx}$.
    
    Similarly, the minimum value of $Q$ is given by
    \begin{align}
    	L\triangleq \lceil N_\Sigma/N_{\rx}\rceil.\label{eq:L}
    \end{align}
    Indeed, the contiguous sum set constraint $\mathbb{S}+\mathbb{D}_{\rx}=\mathbb{U}_{N_\Sigma}$ implies that $|\mathbb{S}||\mathbb{D}_{\rx}|\geq|\mathbb{S}+\mathbb{D}_{\rx}|= N_\Sigma$, i.e., $Q=|\mathbb{S}|\geq N_\Sigma/N_{\rx}$. 
    Since $Q$ is an integer and $\mathbb{S}\subseteq \mathbb{D}_{\tx}$, we have
    \begin{align}
    	Q\in[L,N_{\tx}]. \label{eq:Qrange}
    \end{align}
    Interestingly, any sensor selection waveform in \eqref{eq:sens_sel_wf} with $Q=L$ is a special case of a family of \emph{optimal} rank waveforms. Specifically, $\rank(\bm{S})=L$ is a lower bound on the rank of waveforms that can fully leverage the sum co-array in the sense of identifying the maximum number of $N_\Sigma/2$ targets \cite{rajamaki2023importance}. Whether this lower bound is achievable in the case of a given redundant ($N_\Sigma<N_{\tx}N_{\rx}$) array depends not only on the array configuration itself, but also on whether the chosen waveform $\bm{S}$ is \emph{matched} to the array geometry \cite{rajamaki2023importance}. In the case of sensor selection waveforms, this matching---assuming a contiguous sum co-array---can be verified to be equivalent to satisfying \eqref{eq:sum_set} 
    and $\rank(\bm{U})=|\mathbb{S}|$ in \eqref{eq:sens_sel_wf} \cite{rajamaki2023arrayinformed}.


    Any positive tuple $(Q,N_{\tx},N_{\rx},N_\Sigma)$ is \emph{admissible}, i.e., $\mathcal{C}^\star(Q,N_{\tx},N_{\rx},N_\Sigma)$ is nonempty, if 
    \labelcref{eq:Qrange,eq:NSigma_range} are satisfied.\vspace{-.2cm}

    \subsection{Upper bound on $|\mathcal{C}^\star|$: Tx sensors on edges necessary}
     For any admissible $(Q,N_{\tx},N_{\rx},N_\Sigma)$ tuple, the upper bound $|\mathcal{C}^\star(Q,N_{\tx},N_{\rx},N_\Sigma)|\leq \binom{N_{\tx}}{Q}$ clearly holds. A less trivial upper bound follows from noting that the \emph{extremal}
     Tx sensors 
     must be included in any Tx subset $\mathbb{S}$ for \eqref{eq:sum_set}  
     to hold.
    \begin{lemma}[Upper bound]\label{thm:ub}
    	For any admissible $(Q,N_{\tx},N_{\rx},N_\Sigma)$,
    	\begin{align}
    		|\mathcal{C}^\star(Q,N_{\tx},N_{\rx},N_\Sigma)|\leq \binom{N_{\tx}-2}{Q-2}.\label{eq:ub}
    	\end{align}
    \end{lemma}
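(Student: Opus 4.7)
The plan is to formalize the hint given in the lemma statement: namely that both the minimum and the maximum elements of $\mathbb{D}_{\tx}$ must belong to every admissible codeword $\mathbb{S}$, so the remaining $Q-2$ sensors are chosen from only $N_{\tx}-2$ candidates.

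First, I would normalize the geometry by observing that, since $\mathbb{D}_{\tx}+\mathbb{D}_{\rx}=\mathbb{U}_{N_\Sigma}$, the smallest element of the sum co-array is $\min(\mathbb{D}_{\tx})+\min(\mathbb{D}_{\rx})=0$ and the largest is $\max(\mathbb{D}_{\tx})+\max(\mathbb{D}_{\rx})=N_\Sigma-1$. Without loss of generality we may shift so that $\min(\mathbb{D}_{\tx})=\min(\mathbb{D}_{\rx})=0$, in which case $\max(\mathbb{D}_{\tx})+\max(\mathbb{D}_{\rx})=N_\Sigma-1$.

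Next, I would take an arbitrary codeword $\mathbb{S}\in\mathcal{C}^\star$ and use the defining property $\mathbb{S}+\mathbb{D}_{\rx}=\mathbb{U}_{N_\Sigma}$. Applying the same extremal-element argument to this sumset: the element $0\in\mathbb{U}_{N_\Sigma}$ forces $\min(\mathbb{S})+\min(\mathbb{D}_{\rx})=0$, hence $\min(\mathbb{S})=0=\min(\mathbb{D}_{\tx})$; and the element $N_\Sigma-1\in\mathbb{U}_{N_\Sigma}$ forces $\max(\mathbb{S})+\max(\mathbb{D}_{\rx})=N_\Sigma-1$, hence $\max(\mathbb{S})=\max(\mathbb{D}_{\tx})$. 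Thus both extremal Tx sensors $\min(\mathbb{D}_{\tx})$ and $\max(\mathbb{D}_{\tx})$ (which are distinct as long as $N_{\tx}\geq 2$, guaranteed whenever the tuple is admissible and nontrivial) belong to every codeword.

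The counting step is then immediate: any admissible $\mathbb{S}$ is determined by choosing its remaining $Q-2$ sensors from the $N_{\tx}-2$ non-extremal elements of $\mathbb{D}_{\tx}$, giving at most $\binom{N_{\tx}-2}{Q-2}$ possibilities, which proves \eqref{eq:ub}. The argument is short and the only subtlety is handling the normalization cleanly; I do not anticipate a genuine obstacle, since both directions of the extremal-element inclusion follow from the basic fact that $\min(A+B)=\min A+\min B$ and $\max(A+B)=\max A+\max B$ for finite sets $A,B\subset\mathbb{N}$.
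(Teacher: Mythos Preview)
Your proposal is correct and follows essentially the same approach as the paper: both arguments show that $\min\mathbb{D}_{\tx}$ and $\max\mathbb{D}_{\tx}$ must lie in every admissible $\mathbb{S}$, then count the remaining $Q-2$ choices among $N_{\tx}-2$ sensors. The only cosmetic difference is that the paper phrases the extremal-element step as a contradiction via the set inclusion $\mathbb{S}+\mathbb{D}_{\rx}\subseteq(\mathbb{D}_{\tx}+\mathbb{D}_{\rx})\setminus\{l+\min\mathbb{D}_{\rx},\,u+\max\mathbb{D}_{\rx}\}$, whereas you argue directly from $\min(A+B)=\min A+\min B$ and $\max(A+B)=\max A+\max B$; the content is identical.
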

    \begin{proof}
    	The proof follows by contradiction. Suppose elements $l=\min\mathbb{D}_{\tx}$ and $u=\max\mathbb{D}_{\tx}$ are not included in $\mathbb{S}$. Then $\mathbb{S}+\mathbb{D}_{\rx}\subseteq \mathbb{D}_{\tx}\setminus\{l,u\}+\mathbb{D}_{\rx}\subseteq(\mathbb{D}_{\tx}+\mathbb{D}_{\rx})\setminus\{l+\min\mathbb{D}_{\rx},u+\max\mathbb{D}_{\rx}\}\subset \mathbb{D}_{\tx}+\mathbb{D}_{\rx}$. Hence, $\mathbb{S}+\mathbb{D}_{\rx}= \mathbb{D}_{\tx}+\mathbb{D}_{\rx}$ only if $\mathbb{S}\supseteq \{l,u\}$. This implies that at least two elements of $\mathbb{S}$ must be fixed, which leaves at most $\binom{N_{\tx}-2}{Q-2}$ choices for subset $\mathbb{S}$ satisfying $\mathbb{S}+\mathbb{D}_{\rx}= \mathbb{D}_{\tx}+\mathbb{D}_{\rx}$, regardless of $\mathbb{D}_{\tx},\mathbb{D}_{\rx}$.
    \end{proof}
     Surprisingly, the simple upper bound in \cref{thm:ub} is \emph{tight} for two canonical array geometries illustrated in \cref{fig:ula_nonredundant}: 
    \begin{enumerate*}[label=(\alph*)]
    	\item ULA Tx and Rx arrays (of appropriate size), and 
    	\item \emph{any} nonredundant array.
    \end{enumerate*}
    In these two key cases, we have thus \emph{fully} characterized the size of the optimal IM-ISAC codebook.\vspace{-.2cm}
    
    \subsection{Full characterization of ULA \& nonredundant array}
    
    \begin{proposition}[ULA and nonredundant array]\label{thm:ula_nonred}
    	If $N_\Sigma=N_{\tx}+N_{\rx}-1$ (ULA) and $N_{\tx}\leq N_{\rx}+1$, then for any $Q\in[2,N_{\tx}]$ 
    	\begin{align}
    		|\mathcal{C}^\star(Q,N_{\tx},N_{\rx},N_\Sigma)|=\binom{N_{\tx}-2}{Q-2}.\label{eq:card_ula}
    	\end{align}
    	If $N_\Sigma\!=\!N_{\rx}N_{\tx}$ (nonredundant array), then $Q\!=\!N_{\tx}$ and
    	\begin{align}
    		|\mathcal{C}^\star(Q,N_{\tx},N_{\rx},N_\Sigma)|=1.\label{eq:card_nonr}
    	\end{align}
    \end{proposition}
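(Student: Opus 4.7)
The plan is to combine the upper bound from \cref{thm:ub} with explicit constructions (for the ULA) and a pigeonhole-style counting argument (for the nonredundant array). Since \eqref{eq:card_ula} and \eqref{eq:card_nonr} both match \eqref{eq:ub} (noting $\binom{N_{\tx}-2}{N_{\tx}-2}=1$), it suffices in each case to either exhibit enough admissible codewords to saturate the upper bound, or to rule out all but one codeword.

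For the ULA case, I would fix the canonical choice $\mathbb{D}_{\tx}=\mathbb{U}_{N_{\tx}}$, $\mathbb{D}_{\rx}=\mathbb{U}_{N_{\rx}}$, which gives $\mathbb{D}_{\tx}+\mathbb{D}_{\rx}=\mathbb{U}_{N_{\tx}+N_{\rx}-1}=\mathbb{U}_{N_\Sigma}$. The key claim is that for any $\mathbb{S}\subseteq\mathbb{D}_{\tx}$ of size $Q$ with $\{0,N_{\tx}-1\}\subseteq\mathbb{S}$, one has $\mathbb{S}+\mathbb{D}_{\rx}=\mathbb{U}_{N_\Sigma}$. To see this, observe that $0\in\mathbb{S}$ implies $\mathbb{U}_{N_{\rx}}=\{0\}+\mathbb{D}_{\rx}\subseteq \mathbb{S}+\mathbb{D}_{\rx}$, and $N_{\tx}-1\in\mathbb{S}$ implies $\{N_{\tx}-1,\ldots,N_{\tx}+N_{\rx}-2\}=(N_{\tx}-1)+\mathbb{D}_{\rx}\subseteq \mathbb{S}+\mathbb{D}_{\rx}$. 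The two intervals $[0,N_{\rx}-1]$ and $[N_{\tx}-1,N_{\tx}+N_{\rx}-2]$ cover $[0,N_{\tx}+N_{\rx}-2]$ without gap precisely when $N_{\tx}-1\le N_{\rx}$, which is the hypothesis $N_{\tx}\le N_{\rx}+1$. Thus every $Q$-subset of $\mathbb{D}_{\tx}$ containing $\{0,N_{\tx}-1\}$ is a valid codeword, yielding $\binom{N_{\tx}-2}{Q-2}$ codewords, which matches the upper bound in \cref{thm:ub} and hence proves \eqref{eq:card_ula}.

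For the nonredundant case, since all sums $d_{\tx}+d_{\rx}$ are distinct, $|\mathbb{S}+\mathbb{D}_{\rx}|=|\mathbb{S}|\cdot|\mathbb{D}_{\rx}|=QN_{\rx}$ for every $\mathbb{S}\subseteq\mathbb{D}_{\tx}$. For any codeword $\mathbb{S}\in\mathcal{C}^\star$ we require $\mathbb{S}+\mathbb{D}_{\rx}=\mathbb{D}_{\tx}+\mathbb{D}_{\rx}$, so $QN_{\rx}=N_{\tx}N_{\rx}$, forcing $Q=N_{\tx}$. But then $\mathbb{S}=\mathbb{D}_{\tx}$ is the only option, establishing \eqref{eq:card_nonr}. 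This is also consistent with the admissibility range \eqref{eq:Qrange}, since \eqref{eq:L} gives $L=\lceil N_{\tx}N_{\rx}/N_{\rx}\rceil=N_{\tx}$, so $Q=N_{\tx}$ is the unique admissible cardinality.

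The main (mild) obstacle is the ULA construction: one must notice that the hypothesis $N_{\tx}\le N_{\rx}+1$ is precisely the adjacency condition ensuring the two shifts of $\mathbb{D}_{\rx}$ anchored at the extremes of $\mathbb{D}_{\tx}$ already tile $\mathbb{U}_{N_\Sigma}$, rendering the interior Tx sensors free. Everything else reduces to set-theoretic bookkeeping and a direct count.
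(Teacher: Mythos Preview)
Your proposal is correct and follows essentially the same route as the paper: for the ULA you exhibit the construction $\mathbb{D}_{\tx}=\mathbb{U}_{N_{\tx}}$, $\mathbb{D}_{\rx}=\mathbb{U}_{N_{\rx}}$ and show that the two extreme shifts $\{0\}+\mathbb{D}_{\rx}$ and $\{N_{\tx}-1\}+\mathbb{D}_{\rx}$ already cover $\mathbb{U}_{N_\Sigma}$ under $N_{\tx}\le N_{\rx}+1$, so the $\binom{N_{\tx}-2}{Q-2}$ subsets containing both endpoints saturate \cref{thm:ub}; for the nonredundant case you force $Q=N_{\tx}$ by the counting identity $|\mathbb{S}+\mathbb{D}_{\rx}|=QN_{\rx}$, which is exactly the content of \eqref{eq:Qrange} specialized to $N_\Sigma=N_{\tx}N_{\rx}$.
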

    \begin{proof}
        ULA:
       	Let $Q\geq 2$ and $\mathbb{S}\supseteq \{0,N_{\tx}-1\}$, and recall that $\mathbb{D}_{\tx}=\mathbb{U}_{N_{\tx}}$ and $\mathbb{D}_{\rx}=\mathbb{U}_{N_{\rx}}$. Hence, $\mathbb{S}+\mathbb{D}_{\rx}\supseteq \{0,N_{\tx}-1\}+\mathbb{U}_{N_{\rx}}=\mathbb{U}_{N_{\rx}}\cup(\mathbb{U}_{N_{\rx}}+N_{\tx}-1)=\mathbb{U}_{N_{\rx}+N_{\tx}-1}$, where the last equality follows from assumption $N_{\tx}\leq N_{\rx}+1$. Hence, $\mathbb{D}_{\tx}+\mathbb{D}_{\rx}\supseteq \mathbb{S}+\mathbb{D}_{\rx}\supseteq \mathbb{U}_{N_{\rx}+N_{\tx}-1}=\mathbb{D}_{\tx}+\mathbb{D}_{\rx}$, i.e., $\mathbb{S}+\mathbb{D}_{\rx}=\mathbb{D}_{\tx}+\mathbb{D}_{\rx}$. 
        The remaining $Q-2$ sensors of $\mathbb{S}$ can be freely chosen from $\mathbb{D}_{\tx}\setminus\{0,N_{\tx}-1\}$ without affecting the sum set. Since there are $\binom{N_{\tx}-2}{Q-2}$ such unique choices of $\mathbb{S}$, \eqref{eq:ub} is tight.
        
        Nonredundant array: Note that 
        $N_\Sigma=N_{\rx}N_{\tx}\implies Q=N_{\tx}$ by \labelcref{eq:Qrange}. 
        Substitution into \labelcref{eq:ub}, along the fact that $\mathbb{S}=\mathbb{D}_{\tx}$ (implying that $ |\mathcal{C}^\star|\geq 1$), then yields \eqref{eq:card_nonr}.
    \end{proof}

    The ULA achieves the maximum IM-ISAC codebook size $\binom{N_{\tx}-2}{Q-2}$ by \cref{thm:ula_nonred} 
    and hence enjoys an abundance of codeword choices. 
    This array geometry is of great interest 
    due to its ubiquity 
    in the ISAC literature and beyond. 
    The large codebook of the ULA 
    follows from the \emph{sufficiency}, not just necessity, of
    including the outermost sensors in any Tx subarray ($\mathbb{S}\supseteq\{\min\mathbb{D}_{\tx},\max\mathbb{D}_{\tx}\}$) 
    for the respective sum set to be contiguous ($\mathbb{S}+\mathbb{D}_{\rx}=\mathbb{U}_{N_\Sigma}$)---under mild conditions ($N_{\tx}\leq N_{\rx}+1$). 
    This fact, revealed by 
    the proof of \cref{thm:ula_nonred}, 
    is illustrated in \cref{fig:sens_sel_ULA}. 
    For nonredundant arrays, $\mathcal{C}^\star$ is nonempty only if $Q=N_{\tx}$, which corresponds to selecting \emph{all} Tx sensors. Hence, communication solely via sensor selection is \emph{not} possible when a sum co-array of (maximum) size $N_\Sigma=N_{\tx}N_{\rx}$ is desired, i.e., \emph{redundancy is needed}.
    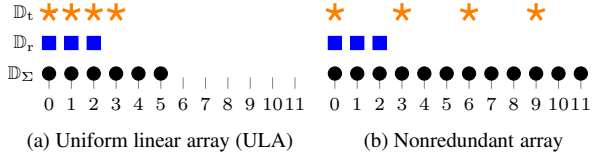
\begin{figure}
		\centering
		\newcommand{\Na}{3}   
		\newcommand{\Nb}{4}
		\newcommand{\wfac}{.71}
        \renewcommand{\figsize}{7}
		\subfloat[Uniform linear array (ULA)]{%
			\begin{tikzpicture} 
				\begin{axis}[width=\wfac*\figsize cm,height=\figsize*0.4 cm,ytick={-1,0,1},yticklabels={$\mathbb{D}_{\Sigma}$,$\mathbb{D}_{\rx}$,$\mathbb{D}_{\tx}$},xmin=-0.2,xmax=\Na*\Nb-1+0.2,ymin=-1.5,ymax=1.5,xtick={0,1,...,\Na*\Nb-1},title style={yshift=0 pt},xticklabel shift = 0 pt,xtick pos=bottom,ytick pos=left,axis line style={draw=none},]
					\addplot[blue,only marks,mark=square*,mark size=2.5] expression [domain=0:(\Na-1), samples=\Na] {0};
					\addplot[orange,only marks,mark=star,mark size=3.5, very thick] expression [domain=0:(\Nb-1), samples=\Nb] {1};
					\addplot[only marks,mark=*,mark size=2.5] expression [domain=0:(\Na+\Nb-2), samples=\Na+\Nb-1] {-1};
				\end{axis}
			\end{tikzpicture}
		}
		\subfloat[Nonredundant array]{
			\begin{tikzpicture} 
				\begin{axis}[width=\wfac*\figsize cm ,height=\figsize*0.4 cm,ytick={},yticklabels={},xmin=-0.2,xmax=\Na*\Nb-1+0.2,ymin=-1.5,ymax=1.5,xtick={0,1,...,\Na*\Nb-1},title style={yshift=0 pt},xticklabel shift = 0 pt,xtick pos=bottom,axis line style={draw=none},ymajorticks=false]
					\addplot[blue,only marks,mark=square*,mark size=2.5] expression [domain=0:(\Na-1), samples=\Na] {0};
					\addplot[orange,only marks,mark=star,mark size=3.5,very thick] expression [samples at ={0,\Na,...,\Na*(\Nb-1)}] {1};	 				
					\addplot[only marks,mark=*,mark size=2.5] expression [domain=0:(\Na*\Nb-1), samples=\Na*\Nb] {-1};
				\end{axis}
			\end{tikzpicture}
		}\vspace{-.2cm}
        \caption{Example of ULA and nonredundant array geometry. \cref{thm:ula_nonred} fully characterizes $|\mathcal{C}^\star|$ for these two cases.}\label{fig:ula_nonredundant}
	\end{figure}

    \begin{figure}
        \centering
        \newcommand{\Na}{4}   
		\newcommand{\Nb}{5}
        \newcommand{\Nbb}{3}   
		\newcommand{\D}{4}
        \newcommand{\Lmax}{11}
		\newcommand{\wfac}{.71}
        \renewcommand{\figsize}{7}
        \subfloat[$\mathbb{S}=\{0,1\}\subset \mathbb{D}_{\tx}=\mathbb{U}_5$]{\label{fig:sens_sel_ULA}
		\begin{tikzpicture} 
			\begin{axis}[width={\wfac*\figsize cm},height={\figsize*0.4 cm},
            ytick={-1,0,1},yticklabels={$\mathbb{S}+\mathbb{D}_{\rx}$,$\mathbb{D}_{\rx}$,$\mathbb{S}$},
            xmin=-0.2,xmax=\Lmax+0.2,ymin=-1.5,ymax=1.5,xtick={0,1,...,\Lmax},title style={yshift=0 pt},xticklabel shift = 0 pt,xtick pos=bottom,ytick pos=left,axis line style={draw=none}]
				\addplot[blue,only marks,mark=square*,mark size=2.5] expression [domain=0:(\Na-1), samples=\Na] {0};
				\addplot[orange,only marks,mark=star,mark size=3.5, very thick,opacity=0.2] expression [domain=0:(\Nb-1), samples=\Nb] {1};
			  \addplot[orange,only marks,mark=star,mark size=3.5,very   thick] coordinates {
					(0,1)
					(\Nb-1,1)};
				\addplot[only marks,mark=*,mark size=2.5] expression [domain=0:(\Na+\Nb-2), samples=\Na+\Nb-1] {-1};
			\end{axis}
		\end{tikzpicture}
        }
        \subfloat[$\mathbb{S}=\!\!\{0,4,9\}\!\subset\!\mathbb{D}_{\tx}\!=\!
        \mathbb{S}\!\cup\!\{1,6\}$
        ]{\label{fig:sens_sel_gen}
		\begin{tikzpicture} 
			\begin{axis}[width=\wfac*\figsize cm,height=\figsize*0.4 cm,ytick={-1,0,1},
            yticklabels={},
            xmin=-0.2,xmax=\Lmax+0.2,ymin=-1.5,ymax=1.5,xtick={0,1,...,\Lmax},title style={yshift=0 pt},xticklabel shift = 0 pt,xtick pos=bottom,ytick pos=left,axis line style={draw=none}]
				\addplot[blue,only marks,mark=square*,mark size=2.5] expression [domain=0:(\Na-1), samples=\Na] {0};
				\addplot[orange,only marks,mark=star,mark size=3.5,very thick] expression [samples at ={0,\D,...,\D*(\Nbb-1)}] {1};
				
 					\addplot[orange,only marks,mark=star,mark size=3.5,very thick,opacity=0.2] coordinates {
					(1,1)
					(6,1)};
				
				\addplot[only marks,mark=*,mark size=2.5] expression [domain=0:(\Na+\D*(\Nbb-1)-1), samples=\Na+\D*(\Nbb-1)] {-1};
			\end{axis}
		\end{tikzpicture}
        }\vspace{-.2cm}
        \caption{Tx subarrays $\mathbb{S}$ sufficient for satisfying \eqref{eq:sum_set} by \cref{thm:lb}. 
        Including 
        the edge sensors is necessary 
        (\cref{thm:ub}). 
        }
        \label{fig:sens_sel}
    \end{figure}
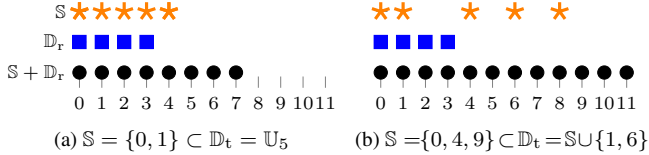

    \begin{remark}[Necessity of redundancy] \label{rem:redundant_im}
    	A redundant array configuration is necessary 
        for communications via Tx sensor selection, 
        when each Tx subarray is constrained to achieve a sum set equal to the (full, possibly contiguous) sum co-array.
    \end{remark}
    
    \cref{rem:redundant_im} hints at an intuitive trade-off between the sizes of the sum co-array $N_\Sigma$ and codebook $|\mathcal{C}^\star|$: 
    a small $N_\Sigma$ seems desirable to achieve a large codebook, while a large $N_\Sigma$ would guarantee identifying more targets. 
    Rigorously establishing and characterizing this trade-off requires deriving both tighter upper and lower bounds for intermediate values of $N_\Sigma$ in \eqref{eq:NSigma_range}. 
    As a step in this direction, we next present 
    a new lower bound displaying this expected trade-off between $|\mathcal{C}^\star|$ and $N_\Sigma$.
    \vspace{-.2cm}
    
    \subsection{Lower bound on $|\mathcal{C}^\star|$: Nested subarray
    construction}
   \begin{proposition}[Lower bound]\label{thm:lb}
   	Suppose $(Q,N_{\tx},N_{\rx},N_\Sigma)$ is admissible and $L=N_\Sigma/N_{\rx}\in\mathbb{N}_+$. Then 
   	\begin{align}
   		|\mathcal{C}^\star(Q,N_{\tx},N_{\rx},N_\Sigma)|\geq \binom{N_{\tx}-L}{Q-L}.\label{eq:lb}
   	\end{align}
   \end{proposition}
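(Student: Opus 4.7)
The plan is to prove the bound constructively: exhibit a single admissible Tx-Rx pair $(\mathbb{D}_{\tx},\mathbb{D}_{\rx})$ whose induced constrained codebook $\mathcal{C}^\txt{c}(Q,\mathbb{D}_{\tx},\mathbb{D}_{\rx})$ already contains at least $\binom{N_{\tx}-L}{Q-L}$ codewords. Since $\mathcal{C}^\star$ is defined as the \emph{largest} such codebook over all admissible geometries, this immediately yields \eqref{eq:lb}. The intuition, consistent with the ``nested subarray'' label of the proposition and with \cref{fig:sens_sel_gen}, is to place a short sparse ULA inside $\mathbb{D}_{\tx}$ that on its own already fills the sum co-array, and then let the remaining Tx sensors be entirely free.

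Concretely, I would take $\mathbb{D}_{\rx}=\mathbb{U}_{N_{\rx}}$ and introduce a ``core'' subarray $\mathbb{K}=\{0,N_{\rx},2N_{\rx},\ldots,(L-1)N_{\rx}\}$ of $L$ sensors with spacing $N_{\rx}$. A direct check gives $\mathbb{K}+\mathbb{D}_{\rx}=\mathbb{U}_{LN_{\rx}}=\mathbb{U}_{N_\Sigma}$, using $L=N_\Sigma/N_{\rx}$. I would then set $\mathbb{D}_{\tx}=\mathbb{K}\cup \mathbb{E}$, where $\mathbb{E}$ is any $(N_{\tx}-L)$-subset of $\mathbb{U}_{(L-1)N_{\rx}+1}\setminus \mathbb{K}$. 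Since $\mathbb{D}_{\tx}\subseteq \mathbb{U}_{(L-1)N_{\rx}+1}$, one has $\mathbb{D}_{\tx}+\mathbb{D}_{\rx}\subseteq \mathbb{U}_{N_\Sigma}$, while $\mathbb{K}\subseteq \mathbb{D}_{\tx}$ forces equality, so the pair is admissible.

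With the geometry fixed, the counting step is immediate: any $Q$-subset $\mathbb{S}\subseteq \mathbb{D}_{\tx}$ containing $\mathbb{K}$ satisfies $\mathbb{U}_{N_\Sigma}=\mathbb{K}+\mathbb{D}_{\rx}\subseteq \mathbb{S}+\mathbb{D}_{\rx}\subseteq \mathbb{D}_{\tx}+\mathbb{D}_{\rx}=\mathbb{U}_{N_\Sigma}$, so $\mathbb{S}\in\mathcal{C}^\txt{c}$. The complement $\mathbb{S}\setminus \mathbb{K}$ can be any $(Q-L)$-subset of the $N_{\tx}-L$ elements of $\mathbb{D}_{\tx}\setminus \mathbb{K}$, yielding exactly $\binom{N_{\tx}-L}{Q-L}$ such codewords.

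The only step that is not purely routine---and the main obstacle---is verifying that a set $\mathbb{E}$ of the required size $N_{\tx}-L$ actually fits inside $\mathbb{U}_{(L-1)N_{\rx}+1}\setminus \mathbb{K}$. This pool has $(L-1)(N_{\rx}-1)$ elements, and the needed inequality $(L-1)(N_{\rx}-1)\geq N_{\tx}-L$ simplifies to $N_{\tx}\leq (L-1)N_{\rx}+1=N_\Sigma-N_{\rx}+1$, which is exactly the lower end of the admissibility range \eqref{eq:NSigma_range}. Hence the construction succeeds for every admissible tuple $(Q,N_{\tx},N_{\rx},N_\Sigma)$, completing the plan.
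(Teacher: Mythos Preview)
Your proposal is correct and follows essentially the same construction as the paper: a ULA receiver $\mathbb{D}_{\rx}=\mathbb{U}_{N_{\rx}}$, a dilated-ULA ``core'' $\mathbb{K}=N_{\rx}\mathbb{U}_L$ inside $\mathbb{D}_{\tx}$, and free placement of the remaining $N_{\tx}-L$ sensors. Your argument is in fact slightly more complete, since you explicitly verify (via the admissibility bound $N_\Sigma\geq N_{\tx}+N_{\rx}-1$) that the pool $\mathbb{U}_{(L-1)N_{\rx}+1}\setminus\mathbb{K}$ is large enough to host those extra sensors---a point the paper leaves implicit.
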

   \begin{proof}
   	The proof follows by construction: we choose $\mathbb{D}_{\tx}$ and $\mathbb{D}_{\rx}$ appropriately, and fix $L=N_\Sigma/N_{\rx}$ of the $Q$ sensors in each Tx array subset $\mathbb{S}\subseteq \mathbb{D}_{\tx}$ such that the sum set is $\mathbb{S}+\mathbb{D}_{\tx}=\mathbb{U}_{N_\Sigma}$ regardless of how the remaining $Q-L$ sensors in $\mathbb{S}$ are selected. In particular, let $\mathbb{D}_{\rx}=\mathbb{U}_{N_{\rx}}$ and $\mathbb{D}_{\tx}=\mathbb{D}_1\cup\mathbb{D}_2$, where $\mathbb{D}_1=N_{\rx}\mathbb{U}_L$ and $\mathbb{D}_2$ is an arbitrary $N_{\tx}-L$ element subset of $\mathbb{U}_{N_{\rx}(L-1)+1}\setminus\mathbb{D}_1$. 
   	Any $Q$-sensor subset $\mathbb{S}$, where $\mathbb{D}_1\subseteq \mathbb{S}\subseteq \mathbb{D}_{\tx}$, thus satisfies $\mathbb{U}_{L}= \mathbb{D}_{\tx}+\mathbb{D}_{\rx}\supseteq \mathbb{S}+\mathbb{D}_{\rx}\supseteq \mathbb{D}_1+\mathbb{D}_{\rx}=\mathbb{U}_{L}$. There are at least $\binom{N_{\tx}-|\mathbb{D}_1|}{Q-|\mathbb{D}_1|}$ such unique choices of $\mathbb{S}$, since the remaining $Q-|\mathbb{D}_1|$ sensors of $\mathbb{S}$ can be freely chosen from $\mathbb{D}_2$ without affecting the sum set. The proof is completed by noting that $|\mathbb{D}_1|=L=N_\Sigma/N_{\rx}$. 
   \end{proof}

				
				

    The lower bound \eqref{eq:lb} of \cref{thm:lb} is constructive and hence \emph{achievable}. Specifically, the Rx array is chosen to be a standard ULA $\mathbb{D}_{\rx}=\mathbb{U}_{N_{\rx}}$ and the Tx array (as well as every subset thereof) to contain an $L$ sensor dilated ULA $\mathbb{D}_{\tx}\supseteq\mathbb{S}\supseteq N_{\rx}\mathbb{U}_{L}$. This generalizes the construction establishing tightness of \eqref{eq:ub} for the standard ULA Tx/Rx array in the proof of \cref{thm:ula_nonred}. \cref{fig:sens_sel_gen} shows an example of this generalized construction. Extensions of \cref{thm:lb} to non-integer values of $N_\Sigma/N_{\rx}$ are possible and part of future work.\vspace{-.2cm}

    \subsection{Towards a tighter characterization of $|\mathcal{C}^\star|$}
    
    \cref{fig:bounds} illustrates the derived bounds \labelcref{eq:ub,eq:lb} as a function of the number of sum co-array elements $N_\Sigma$ and selected Tx sensors $Q$. The number of physical sensors $N_{\tx},N_{\rx}$ is fixed in both cases. \cref{fig:bounds_NSigma} shows that for a fixed $Q$, the lower bound on the size of the optimal codebook $|\mathcal{C}^\star|$ decreases with increasing $N_\Sigma$, which is consistent with 
    a trade-off between identifiability and codebook size. The upper bound is tight at $N_\Sigma=N_{\tx}+N_{\rx}-1$ and $N_\Sigma=N_{\tx}N_{\rx}$. For intermediate values $N_{\tx}+N_{\rx}-1<N_\Sigma<N_{\tx}N_{\rx}$, tighter bounds likely exist and present a pertinent direction for future work. 
    
    \cref{fig:bounds_Q} shows that for a fixed $N_\Sigma$, both the upper and lower bounds are concave in $Q$. This raises the question: which value of $Q$ maximizes $|\mathcal{C}^\star|$? For the unconstrained codebook in \eqref{eq:codebook_unconstrained}, the answer is clear since $\binom{N_{\tx}}{Q}=\frac{N_{\tx}!}{(N_{\tx}-Q)!Q!}$ is maximized by 
    $Q=\lfloor N_{\tx}/2 \rceil$. 
A similar conclusion holds for the IM-ISAC codebook $\mathcal{C}^\star$ if $N_\Sigma=N_{\tx}+N_{\rx}-1$ (ULA), since by \cref{thm:ula_nonred}, $|\mathcal{C}^\star|=\binom{N_{\tx}-2}{Q-2}$, which is maximized by $Q=\lfloor N_{\tx}/2-1 \rceil+2 $. However, as \cref{fig:bounds_Q} demonstrates, the answer is nontrivial for other values of $N_\Sigma$, 
    and 
    remains 
    an interesting open question for future work.
    \vspace{-.2cm}

   	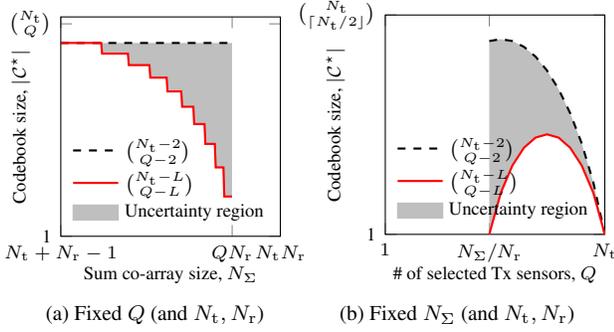
\begin{figure}
   	\centering
   	\subfloat[Fixed $Q$ (and $N_{\tx},N_{\rx}$)]{\label{fig:bounds_NSigma}
   			\begin{tikzpicture}[
   				declare function={binom(\n,\k)=\n!/(\k!*(\n-\k)!);}]
   				\begin{axis}[width=1*\figsize cm,height=1*\figsize cm,yticklabel shift = 0 pt,
   					ylabel shift=-15 pt,
   					xticklabel shift = 0 pt,xlabel shift = -5 pt,xmin
   					=\Nt+\Nr-1,xmax=\Nt*\Nr,ymin=1,
   					xlabel={Sum co-array size, $N_\Sigma$},
   					xtick={\Nt+\Nt-1,\Nr*\Q,\Nr*\Nt},
   					xticklabels={$N_{\tx}+N_{\rx}-1$,$QN_{\rx}$,$N_{\tx}N_{\rx}$},
   					ylabel={Codebook size, $|\mathcal{C}^\star|$}, ytick={1,{binom(\Nt,\Q)}},yticklabels={1,$\binom{N_{\tx}}{Q}$},
   					ymax = {binom(\Nt,round(\Nt/2))},
   					ymode=log,
   					xmode=log,
   					legend style = {
   						at={(0.5,0.02)},anchor = south,legend columns=1,draw=none,fill=none},legend cell align={left}
   					]
   					
   					\addplot[domain=\Nt+\Nr-1:\Nr*\Q,samples=(\Nr*\Q-\Nt-\Nr), 
   					thick,black,dashed,name path=ub]{binom(\Nt-2,\Q-2)};
   					\addlegendentry{$\binom{N_{\tx}-2}{Q-2}$}
   					\addplot[domain=\Nt+\Nr-1:\Nr*\Q,samples=(\Nr*\Q-\Nt-\Nr),
   					thick,red,name path=lb]{binom(\Nt-max(2,floor(x/\Nr)),round(\Q-max(2,floor(x/\Nr))))};
   					\addlegendentry{$\binom{N_{\tx}-L}{Q-L}$}
   					\addplot[color=lightgray,fill=lightgray]
   					fill between[of=ub and lb];
   					\addlegendentry{Uncertainty region}
   					
   				\end{axis}
   			\end{tikzpicture} 
   	}\hspace{-1.5em}%
   	\subfloat[Fixed $N_\Sigma$ (and $N_{\tx},N_{\rx}$)]{\label{fig:bounds_Q}
   			\begin{tikzpicture}[
   				declare function={binom(\n,\k)=\n!/(\k!*(\n-\k)!);}]
   				\begin{axis}[width=1*\figsize cm,height=1*\figsize cm,yticklabel shift = 0 pt,
   					ylabel shift=-25 pt,
   					xticklabel shift = . pt,xlabel shift = -5 pt,xmin
   					=1,xmax=\Nt,ymin=1,
   					xlabel={\# of selected Tx sensors, $Q$},
   					xtick={1,\NSigma/\Nr,\Nt},
   					xticklabels={$1$,$N_\Sigma/N_{\rx}$,$N_{\tx}$},
   					ylabel={Codebook size, $|\mathcal{C}^\star|$}, ytick={1,{binom(\Nt,round(\Nt/2))}},
                    yticklabels={1,$\binom{N_{\tx}}{\lceil N_{\tx}/2\rfloor}$},
   					ymax={binom(\Nt,round(\Nt/2))},
   					ymode=log,
   					legend style = {
   						at={(0.5,0.02)},anchor = south,legend columns=1,draw=none,fill=none},legend cell align={left}
   					]
   					
   					\addplot[domain=ceil(\NSigma/\Nr):\Nt,samples=(\Nt-ceil(\NSigma/\Nr)+1), 
   					thick,black,dashed,name path=ub]{binom(\Nt-2,x-2)};
   					\addlegendentry{$\binom{N_{\tx}-2}{Q-2}$}
   					\addplot[domain=ceil(\NSigma/\Nr):\Nt,samples=(\Nt-ceil(\NSigma/\Nr)+1),
   					thick,red,name path=lb]{binom(\Nt-ceil(\NSigma/\Nr),round(x-ceil(\NSigma/\Nr)))};
   					\addlegendentry{$\binom{N_{\tx}-L}{Q-L}$}
   					\addplot[color=lightgray,fill=lightgray]
   					fill between[of=ub and lb];
   					\addlegendentry{Uncertainty region}
   					
   				\end{axis}
   			\end{tikzpicture} 
   	}\vspace{-.2cm}
   	\caption{Derived bounds on size of the optimal IM-ISAC codebook. The bounds are tight for $N_\Sigma=N_{\tx}+N_{\rx}-1$ (ULA) and $N_\Sigma=N_{\tx}N_{\rx}$ (nonredundant arrays).
    }\label{fig:bounds}
   \end{figure}

    \section{Conclusions}\vspace{-0.3cm}
    
    This paper presented first results on transmit-sensor-selection-based ISAC waveform design with identifiability guarantees. Such waveforms find applications in resource-efficient MIMO ISAC systems, where, for instance, the transmitter has a limited number of RF chains due to power or cost constraints. 
    We formulated a novel codebook optimization (transmit subarray selection) problem, where the size of the codebook was maximized subject to 
    guaranteeing a desired number of identifiable targets. 
    We fully characterized the size of this codebook in case of two widely-considered array geometries: the ULA and nonredundant array. We showed that only redundant arrays can communicate via transmit sensor selection under the considered identifiability constraint. Interestingly, the ULA achieves the maximum codebook cardinality, 
    providing greater constellation size and 
    flexibility in choosing codewords. This comes at the expense of lower target identifiability compared to less redundant arrays---indicative of a general trade-off between communications and sensing performance. 
    To tentatively characterize this trade-off, we derived upper and lower bounds on the size of the optimal codebook. Directions for future work include tightening these bounds and further exploring the potential of sensor-selection-based waveform 
    design in ISAC.
    \vspace{-0.4cm}


    \bibliographystyle{IEEEtran}
    \let\oldthebibliography\thebibliography
	\let\endoldthebibliography\endthebibliography
	\renewenvironment{thebibliography}[1]{
	  \begin{oldthebibliography}{#1}
	    \setlength{\itemsep}{0em}
	    \setlength{\parskip}{0em}
	}
	{
	  \end{oldthebibliography}
	}
    \bibliography{IEEEabrv,references}
	
\end{document}